\newtheorem{lemma}{Lemma}
\newcommand{\modelname}{HG2NP}
\title{Heterogeneous Graph Generation: A Hierarchical Approach using Node Feature Pooling}
\author{%
  Hritaban~Ghosh \\
  Indian Institute of Technology Kharagpur, India\\
  \texttt{ghoshhritaban21@iitkgp.ac.in} \\
  \And
  Chen~Changyu \\
  Singapore Management University, Singapore \\
  \texttt{cychen.2020@phdcs.smu.edu.sg} \\
  \AND
  Arunesh~Sinha \\
  Rutgers University, Newark, USA\\
  \texttt{arunesh.sinha@rutgers.edu } \\
  \And
  Shamik~Sural \\
  Indian Institute of Technology Kharagpur, India\\
  \texttt{shamik@cse.ac.in} \\
}
\date{}
\begin{document}

\maketitle

\begin{abstract}
  Heterogeneous graphs are present in various domains, such as social networks, recommendation systems, and biological networks. Unlike homogeneous graphs, heterogeneous graphs consist of multiple types of nodes and edges, each representing different entities and relationships. Generating realistic heterogeneous graphs that capture the complex interactions among diverse entities is a difficult task due to several reasons. The generator has to model both the node type distribution along with the feature distribution for each node type. In this paper, we look into solving challenges in heterogeneous graph generation, by employing a two phase hierarchical structure, wherein the first phase creates a  skeleton graph with node types using a prior diffusion based model and in the second phase, we use an encoder and a sampler structure as generator to assign node type specific features to the nodes. 
  A discriminator is used to guide training of the generator and feature vectors are sampled from a node feature pool. We conduct extensive experiments with subsets of IMDB and DBLP datasets to show the effectiveness of our method and also the need for various architecture components.
\end{abstract}

\section{Introduction}
Heterogeneous graphs appear ubiquitously in various domains, including social networks, recommendation systems, and biological networks. Heterogeneous graphs, as opposed to homogeneous ones, have multiple node types and sometimes multiple edge types. They allow for the modeling of diverse entities and their interconnections, providing a richer and more comprehensive view of the underlying system, such as in citation networks~\cite{zhou2007co}, social networks ~\citep{dong2012link}, and biological networks ~\citep{chen2012drug,li2010genome,wang2014drug}. In a social network, for example, an individual can be part of a group and groups can be part of a community. Even in this simplified description of a social network, several node types can be identified. For instance, node types in this network are  ``individual'', ``group'', and ``community''. In this work, we focus on generating such heterogeneous graphs that in turn can improve the accuracy and reliability of predictive analytics, anomaly detection, and risk assessment systems built from such synthetic data. This further can lead to improved performance in various tasks such as recommendation~\citep{chen2021graph}, classification~\citep{santos2018representation}, and link prediction~\citep{dong2012link}.

The level of detail and meaningful relationships captured by heterogeneous graphs cannot be achieved by homogeneous graphs without several assumptions and constraints. For example, if one were to model node types in a social network using a homogeneous graph, one would have to set aside components in the node feature vector to indicate the discrete node type, which presents an issue when the node features are real valued as we mix real valued and discrete valued features in one vector. To add to the complexity, node features can depend on the type, thus, for example, if a feature vector for individual type node can be encoded in $i$-bits, a feature vector for group type can be encoded in $j$-bits and a feature vector for community type can be encoded in $k$-bits, the node feature vector in the homogeneous graph must be at least $\max(i, j, k)$-bits and zeroes would fill up unused bits for shorter feature vectors. This leads to a wastage of resources and therefore, it is imperative to use more advanced data structures such as heterogeneous graphs.

In recent years, there has been a tremendous effort in the field of homogeneous graph generation \citep{guo2022systematic,you2018graphrnn,su2019graph,d2019deep,li2018learning,simonovsky2018graphvae,vignac2022digress}. In the area of graph neural networks~\citep{hu2020heterogeneous}, heterogeneous graphs such as IMDB and DBLP have been investigated. However, to the best of our knowledge, there is only one prior work in heterogeneous graph generation~\citep{ling2021deep}, which is lacking in many aspects (see related work). In this work, we introduce a novel hierarchical approach called Heterogeneous Graph Generation with Node Pools (\modelname) for heterogeneous graphs which stands on the shoulders of work on homogeneous graph generation. 

\textbf{Problem Statement}: We denote a heterogeneous graph by $G = (V, E)$. 
Nodes have types and let $K$ be the number of node types. 
Given a set of $M$ observed heterogeneous graphs $\{G_i\}_{i=1}^M$, the heterogeneous graph generation problem is to learn
the probability distribution of these 
graphs $\mathbb{P}$, from which new graphs can be sampled $G \sim \mathbb{P}$.

Our \emph{first contribution} is to divide the problem into two easier tasks, accomplished in two phases. In the first phase, we generate a skeleton graph with node types while in the second phase, we assign features to the nodes in the skeleton graph. As a \emph{second contribution}, we make use of message passing (MP) and a node feature vector pool to design an adversarial generator-discriminator architecture for the second phase. The MP module provides neighborhood information that is critical for generating features of nodes. The node features are sampled from a pool of feature vectors constructed from all the data, as we find that generation of real valued large dimensional vectors with no restriction produces poor quality outcomes.
Also, while the problem is similar to any other generative problem, measuring the fidelity of generated heterogeneous graphs requires new domain specific metrics that take nodes types into account. As our \emph{third contribution}, we introduce two new metrics for such measurement. Our \emph{fourth contribution} is the concept of labeled permutation equivariance and invariance, which are relevant for graphs with node types, and a proof showing that our model possesses these properties. Finally, we perform comprehensive experiments with IMDB and DBLP graph data to show the superiority of our method compared to baselines.


\section{Related Work and Background}

Graph generation holds immense practical value, with applications ranging from the design of molecular structures in drug development~\citep{gomez2018automatic,li2018learning,li2018multi,you2018graph} to the optimization of model architectures through generative models for computation graphs~\citep{wasserman1996logit}. In \cite{zhu2022survey}, the authors survey deep graph generation techniques and their practical applications. In~\cite{guo2022systematic}, the authors highlight the formal definitions and taxonomies of several deep generative models for homogeneous graph generation including but not limited to \cite{you2018graphrnn}, \cite{su2019graph}, \cite{d2019deep}, \cite{khodayar2019deep}, \cite{zhang2019d}, \cite{assouel2018defactor}, \cite{lim2020scaffold}, \cite{li2018learning}, \cite{liu2018constrained}, \cite{kearnes2019decoding}, \cite{bresson2019two}, \cite{guarino2017dipol}, \cite{flam2020graph}, and \cite{niu2020permutation}. \cite{guo2022systematic} also provide an overview of the evaluation metrics in this specific domain. In \cite{kipf2016variational}, the authors introduce the variational graph auto-encoder (VGAE), a framework for unsupervised learning on graph-structured data based on the variational auto-encoder (VAE) as originally introduced in \cite{simonovsky2018graphvae}. In recent years, alongside the aforementioned approaches, diffusion-based methods have gained prominence, with several studies conducted in this area~\citep{vignac2022digress,qin2023sparse,jo2023graph}. The DiGress model~\citep{vignac2022digress} utilizes a discrete diffusion process that progressively edits graphs with noise, through the process of adding or removing edges and changing the categories. A graph transformer network is trained to revert this process. This model is used as the skeleton graph generator in the first phase of our approach.

As opposed to the vast majority of graph generation papers listed above and in surveys,~\citet{bojchevski2018netgan} proposed a generative approach called NetGAN that was trained only on one input graph. This different paradigm aims to generate graphs that match properties of the single graph used for training, thus, there is no learning of distribution over graphs. 
The only work that we know of for heterogeneous graph generation~\citep{ling2021deep} is also based on the single graph training paradigm. They present a heterogeneous path generator designed to path instances. The paths are a sequence of node types and edge types, effectively encapsulating  semantic information between their endpoints. Additionally, they propose a heterogeneous graph assembler capable of  stitching the generated path instances into one heterogeneous graph in a stratified manner. 
Expanding on their own work,~\citet{ling2023motif} extend the methodology by incorporating a graph motif generator. This aims to enhance the characterization of higher-order structural distributions within heterogeneous graphs. The significantly different framework and metrics for this work makes it inapplicable to our problem of learning the distribution of graphs and then generating from the learned distribution. Additionally, as the code from this paper is not reproducible for new datasets, we were unable to even adapt the work for comparison.

\textbf{Message Passing Background}: Message passing in a graph allows to effectively incorporate connectivity information between nodes. Assume a graph $G = (V,E)$ with initial node features $x_u$ for node $u \in V$ and edge features $e_{uv} \in E$ (edge feature might be empty). Let $N(u)$ denote neighbor nodes of $u$. Message passing is defined by an operation
$h_u = U(x_u, \bigoplus_{v \in N(u)} M(x_u, x_v, e_{uv}))$ that updates the current node features $x_u$ to $h_u$ and is done repeatedly for all nodes. Here, $U$ is an update function, $\bigoplus$ is an aggregation function and $M$ is a message function (builds message from $v$ to $u$). There are many variations of these functions in the literature. We use a specific popular one called graph attention networks~\citep{brody2021attentive} which builds these functions based on attention mechanism; see the reference for details.

\textbf{DiGress Background}: 
DiGress is a discrete denoising diffusion model for generating homogeneous graphs with categorical node and edge attributes. We briefly describe DiGress~\citep{vignac2022digress}. There are three types of discrete noises in the diffusion model, namely, categorical edits in which the type of a node is changed, edge additions in which a new edge is added, and edge removals in which an existing edge is removed. A graph transformer network is trained to revert this process.  For any node, the transition probabilities are defined by the matrices $[Q^{t}_{X}]_{ij} = q(x_{t} = j | x^{t-1} = i)$ and similarly for edges, they are defined as $[Q^{t}_{E}]_{ij} = q(e_{t} = j | e^{t-1} = i)$. Furthermore, adding noise to current graph to form $G^{t} = (X^{t}, E^{t})$, implies sampling each node type and edge types from a categorical distribution given by $q(G^{t}|G^{t-1}) = (X^{t-1}Q^{t}_{X}, E^{t-1}Q^{t}_{E})$. Extending this over $t$ timesteps, the following is obtained:
$q(G^{t}|G) = (X^{t-1}\bar{Q}^{t}_{X}, E^{t-1}\bar{Q}^{t}_{E})$
where $\bar{Q}^{t}_{X} = Q^{1}_{X}...Q^{t}_{X}$ and $\bar{Q}^{t}_{E} = Q^{1}_{E}...Q^{t}_{E}$.
Lastly, the denoising neural network $\phi_{\theta}$ takes a noisy graph $G^{t} = (X^{t}, E^{t})$ as input and predicts a clean graph $G$. Once the network is trained, it can be used to sample new graphs. We use a fine-tuned DiGress model on our datasets to obtain skeleton graphs for our second phase.

\section{Methodology} \label{sec:methodology}
In this section, we describe our model \modelname\ in detail. First, we note that as per our problem statement in the introduction, we aim to generate node types but not explicit edge types. This does not mean that we cannot express edge types, since we can implicitly define an edge type by the node types of its endpoints in many scenarios.
In fact, this suffices for the edge relationships found in the standard DBLP and IMDB data available from PyTorch Geometric~\citep{Fey/Lenssen/2019}. 
Nonetheless, in case of richer edge labels that cannot be defined by the type of endpoints, one can always convert such a graph to one with node types only by introducing a node $n_e$ for each labeled edge $e$ (edge $e = (v,u)$ is replaced by unlabeled edges $(v,n_e)$ and $(n_e, u)$) and assigning the edge label of $e$ to this new node $n_e$, albeit at the cost of a larger graph.

\modelname\ is a two phase scheme, in which the first phase leverages a state-of-the-art existing homogeneous graph generation framework to output a \emph{skeleton graph} with nodes and edges, and node type. Then, in the second phase, we assign feature vectors to the nodes. The second phase is set-up as a generative adversarial network. An overview is shown in Figure~\ref{fig:overall}. Details of each of the phases are provided in the next sub-sections.

\begin{figure}[t]
    \centering
    \fbox{\includegraphics[width=0.99\textwidth]{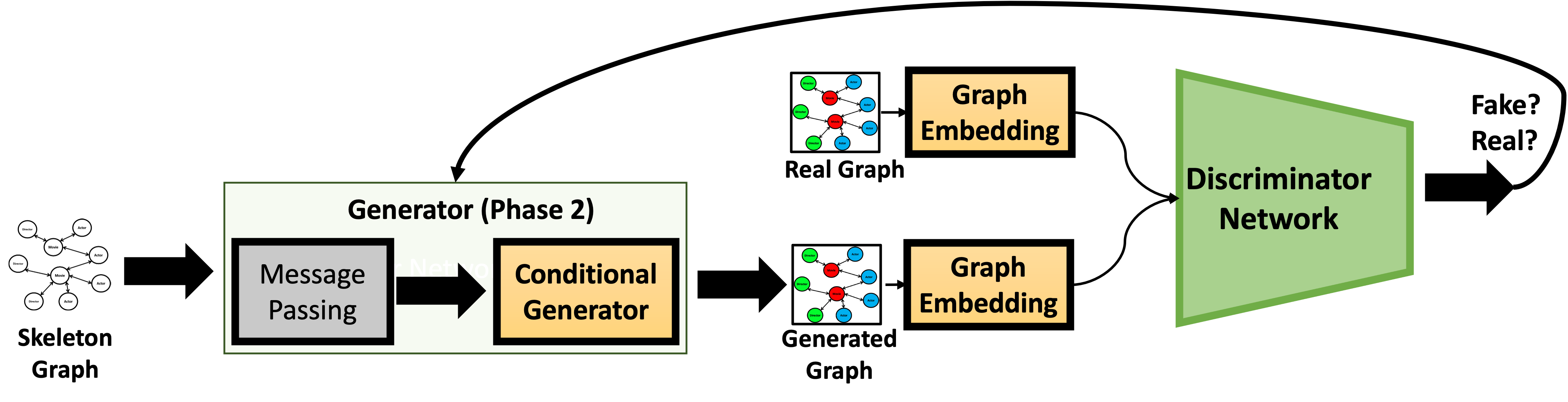}}
    \caption{Overall generation process for \modelname. The skeleton graph is the output of Phase 1. Phase 2 uses a GAN type structure (see details in text and Figure~\ref{fig:generation}).}     \label{fig:overall}
\end{figure}

\subsection{Phase 1: Skeleton Graph Generation}

We leverage existing state-of-the-art model DiGress~\citep{vignac2022digress} designed for generating homogeneous graphs. DiGress is a diffusion based generative model; we utilize this to learn the distribution patterns of node type underlying the heterogeneous graphs in our datasets. For training, we extract real skeleton graphs from the dataset for training of the generator by keeping only the node, node type and edge information. 
By breaking down the complex task of heterogeneous graph generation into two distinct phases, we intend to employ effective models for the easier tasks in each phase.

\subsection{Phase 2: Heterogeneous Feature Vector Assignment}

Our heterogeneous node feature vector assignment step is based on the adversarial generator-discriminator architecture, conditional on the skeleton graph generated in Phase 1.

\textbf{Multi-type Multi-generator}: Given a skeleton graph $S$ with node types $x_u$ from Phase 1, in the second phase, we first utilize a message passing module to update each node's type vector to obtain a graph $U(S)$ with updated node type vectors $h_u$ for each node $u$. We utilize graph attention based message passing developed in prior work~\citep{brody2021attentive}. Based on the intuition of message passing, the updated node type vector captures the neighborhood node type information. We also keep a copy of the original node type vector to be used later as described next.  There is a separate conditional generator $g_k(x_u, h_u; \theta_k)$ for each node type $k$. From the original node type vector, we choose the corresponding $g_k$ and then, conditioned on the updated node type vector for this node, we sample a feature vector from a node type specific pool of node feature vectors. 
The conditional generators are implemented using a simple fully connected neural network with weights $\theta_k$ with Leaky ReLU activation and a softmax at the end followed by sampling of feature. Note that the softmax is over the node pool of feature vectors of each type, which is why the network is different for each node type. In this way, all the nodes of the graph are assigned feature vectors. This yields a generated heterogeneous graph. The full generation (including Phase 1) is shown by an example in Figure~\ref{fig:generation}. The intuition behind using a pool of node feature vectors is based on our preliminary experiments where we found that learning to generate node feature vectors trained very slowly and always resulted in poor quality node features; we verify this with an ablation in experiments.

\begin{figure}[t]
    \centering
    \fbox{\includegraphics[width=0.99\textwidth]{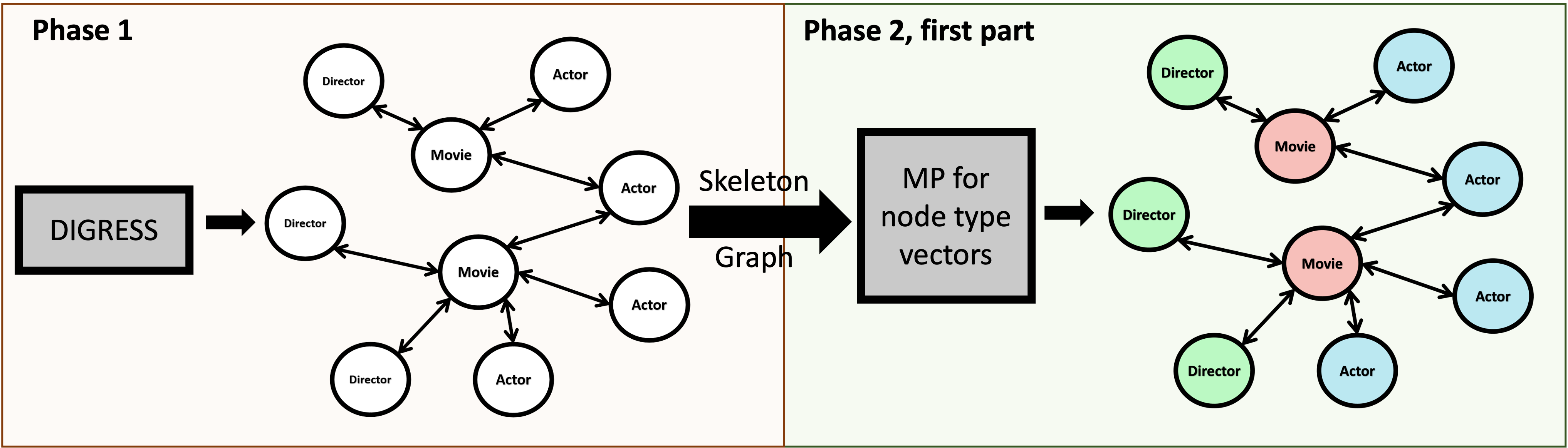}}
    \fbox{\includegraphics[width=0.99\textwidth]{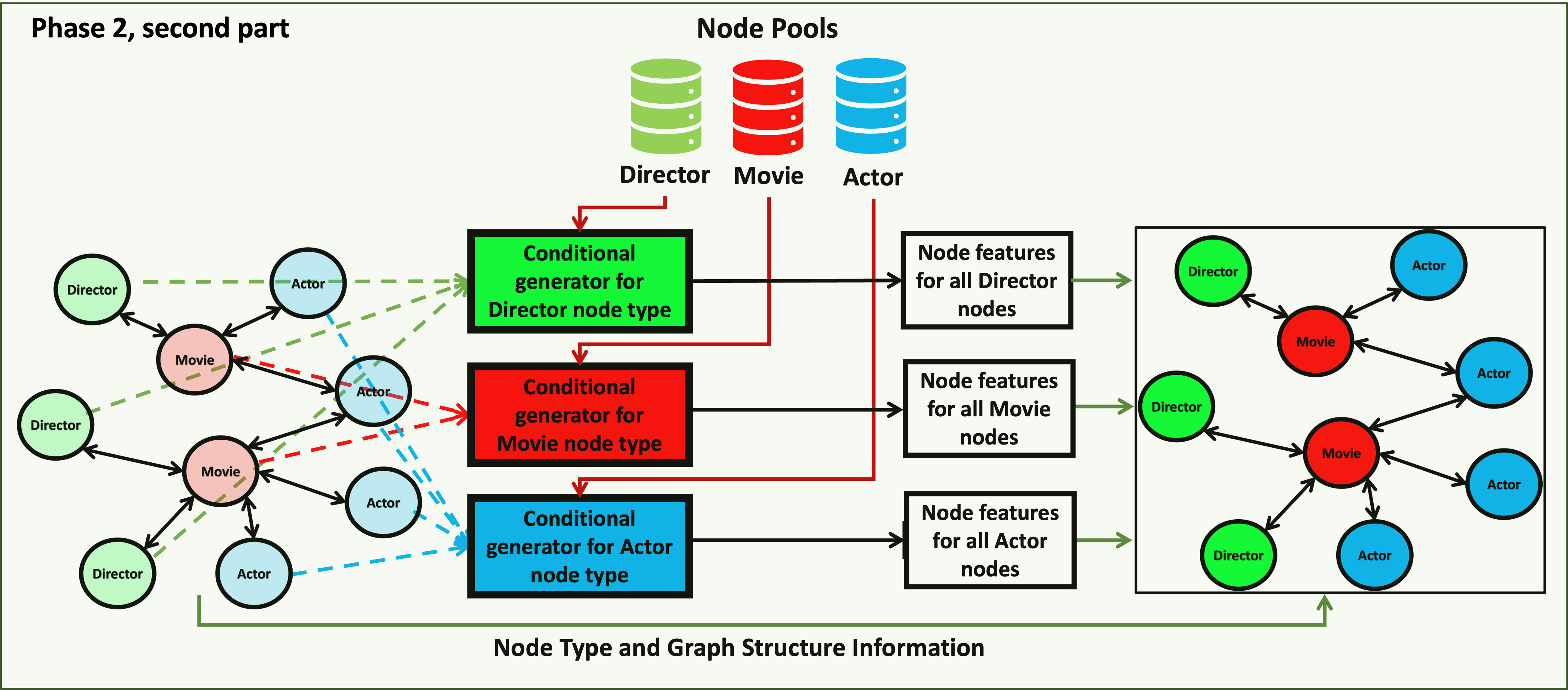}}
    \caption{The generation process using IMDB data as example. In Phase 1, we use DiGress to output a graph with node types, after which the first part in Phase 2 is to perform message passing to embed neighbor node type information in each node type vector.}     \label{fig:generation}
\end{figure}

Formally, we let $\theta = \{\theta_k\}_{k=1}^K$ be the weights of all the generators and the final generated graph be $g_{\theta}(U(S))$, where $g_{\theta}$ is a function that takes $U(S)$ and runs $g_k(x_u, h_u; \theta_k)$ for each node $u$ of type $k$ to get the generated heterogeneous graph. Note that we will need to differentiate $g_\theta$, which in turn requires differentiating through $g_k(\cdot, \cdot; \theta_k)$ and $U$. The MP update $U$ is differentiable by construction, but $g_k(\cdot, \cdot; \theta_k)$ is not since it involves a sampling step from the softmax output. To handle this, we use the well known Gumbel softmax trick from the literature, which approximates non-differentiable argmax by the differentiable softmax in the backward gradient pass to enable passing back gradients through the sampling operation. Thus, we will consider $g_{\theta}(U(S))$ to be differentiable with respect to $\theta$.

\textbf{Discriminator}: The generated heterogeneous graphs and the real heterogeneous graphs from the dataset are used to train a discriminator. The discriminator also employs an MP module to obtain an embedding for the heterogeneous graph. In particular, we use a graph attention based MP module that is specifically designed to work with heterogeneous graphs~\citep{hu2020heterogeneous}. The MP module itself reduces the dimension of node vectors of different node types to a single latent dimension. Then, the node vectors of each type are averaged independently and the averaged vectors are concatenated in a given order of types to form an embedding of the full heterogeneous graph. Call this averaging and concatenating process as function $F$. This latent vector is passed through a simple fully connected neural network with Leaky ReLU activation and a softmax at the end that outputs the probability of the graph being real. 
Formally, given a graph $G$, the heterogeneous MP module produces an updated graph $U_h (G)$ which is processed to a single vector $F(U_h (G))$ that forms the input to discriminator $D_w$ producing $D_w(F(U_h (G))) \in [0,1]$ as the prediction for fake (0) or real (1). Note that $U_h$ and $F$ are differentiable by construction.

\textbf{Loss Functions}: For the discriminator, since the task is a standard classification task, the loss used is binary cross entropy loss.
Also, the generator loss to minimize with respect to $\theta$ is given by $\log(1 - D_w(F(U_h (g_{\theta}(U(S)))))$; recall that we have used the Gumbel softmax trick~\citep{jang2016categorical} to make $g_\theta$ differentiable.

\subsection{Evaluation Metrics}
\label{Evaluation Metrics}
We define two new evaluation metrics specific for heterogeneous graphs.

\textbf{Feature-level EMD distance}: First, we measure how different two given heterogeneous graphs $G_r$ and $G_g$ are, denoted by $D(G_r, G_g)$. Denote the empirical feature distribution conditioned on the node type $i$ as $P_r(f|i)$ for $G_r$ and $P_g(f|i)$ for $G_g$. In order to compute EMD (or Wasserstein distance in continuous space), we need a distance (or cost) between feature vectors. We have feature vectors that are bag of words or GloVe vectors~\cite{pennington2014glove}. For bag of words model, the use of Jaccard distance and for 
GloVe vectors, the use of Euclidean distance are well accepted distances used in literature~\citep{calvo2020evaluation,singh2021text}. We compute the Wasserstein distance $W(P_r(f|i), P_g(f|i))$ between the $P_r(f|i)$ and $P_g(f|i)$ using the feature vectors for node type $i$ s in each graph and invoking a standard optimal transport solver~\citep{flamary2021pot}. Next, we note that there are two empirical node type distributions $T_r$ and $T_g$ in both the graphs.
We  define
$$
D(G_r, G_g) = \frac{\mathbb{E}_{i \sim T_r} [D(i)] + \mathbb{E}_{i \sim T_g} [D(i)]}{2} \;\mbox{ where }\; D(i) = \mbox{W}(P_r(f|i), P_g(f|i))
$$

Next, given a set of sampled real and generated graphs from some real distribution $\mathbb{P}$ and fake distribution $\mathbb{P}_g$, we use the above $D$ to compute pairwise distances between all pairs of graphs. Then, we estimate the Feature EMD as the EMD computed on these sampled graphs with distance $D$. A detailed algorithm for the computation is in the appendix.

\textbf{Type Degree Distribution Metric}: In homogeneous graph generation, a popular metric used is degree distribution MMD. This metric is computed between two sets of graphs, a real set and a generated set, by first determining the degree distribution (a histogram of degrees normalized to 1) for every graph in two sets of graphs. Then, the Maximum Mean Discrepancy (MMD) is computed using these degree distributions; see~\citep{o2021evaluation} for details. We modify this to work with node types. Essentially, for every node, we define a type $i$ degree as the number of nodes of type $i$ that this node is connected to. Next, we obtain a type degree distribution for each of the $K$ types, followed by a type degree distribution MMD (given graph samples) for each type. finally, we average the obtained $K$ type degree distribution MMDs (i.e., add and divide by $K$) to get the type degree distribution metric. 

\subsection{Permutation Equivariance and Invariance}
In graph processing, neural network architectures that produce outputs that are equivariant (in case of output per node) or invariant, to isomorphic input graphs are desired, as they remove the need for data augmentation and generally make training simpler. Isomorphism is defined by edge preserving permutations of nodes~\citep{hsieh2006efficient}, hence the neural networks that satisfy these properties are called permutation equivariant and permutation invariant. In our work, there are three parts of the architecture: Phase 1 of generator, Phase 2 of generator and the discriminator in Phase 2. Phase 1 generator is just DiGress, which has been shown to be permutation equivariant and permutation invariant~\citep{vignac2022digress}. 

For the networks in Phase 2, which handles heterogeneous graphs with node types, we need to appeal to a notion of isomorphism for labeled graphs~\citep{hsieh2006efficient}. Here, we treat the type of node as the label for the node. Briefly, two labeled graphs are isomorphic iff there exists a bijection $f$ between nodes such that it is (1) edge preserving: if $(u,v)$ is an edge iff $(f(u),f(v))$ is an edge and (2) label preserving: $u$ and $f(u)$ have the same label. Hence, we use the term \emph{labeled permutation equivariance or invariance} to denote output equivariance or invariance to isomorphic labeled input graphs. Note that any network that is permutation equivariance or invariance is also labeled permutation equivariance or invariance, as labeled permutation is stricter than permutation because labeled permutation requires preserving labels in addition to preserving edges.

\begin{lemma}
Phase 2 generator is labeled permutation equivariant. 
\end{lemma}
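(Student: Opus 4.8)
The plan is to reduce labeled permutation equivariance of the entire Phase 2 generator to two facts: (i) the graph-attention message-passing map $U$ is permutation equivariant, and (ii) the conditional generator $g_\theta$ is applied node-wise by a rule that reads only a node's own type label and its (index-free) updated vector. Let $\pi$ be a permutation of the node set $V$, and write $\pi\cdot S$ for the graph in which the node formerly indexed $u$ is re-indexed $\pi(u)$, carrying the same type $x_u$, with edges relabeled so that $(\pi(u),\pi(v))$ is an edge of $\pi\cdot S$ iff $(u,v)$ is an edge of $S$; I write $\pi\cdot$ for the induced relabeling of any node-indexed output. Since this action always carries the type label along with the node, by the remark preceding the lemma it suffices to establish ordinary permutation equivariance, namely $g_\theta(U(\pi\cdot S)) = \pi\cdot g_\theta(U(S))$ (in the distributional sense made precise below).

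First I would establish $U(\pi\cdot S)=\pi\cdot U(S)$, i.e.\ the updated vector at node $\pi(u)$ in $\pi\cdot S$ equals the $h_u$ computed in $S$. This is the standard equivariance of graph attention: the update $h_u = U(x_u,\bigoplus_{v\in N(u)} M(x_u,x_v,e_{uv}))$ depends only on $x_u$ and the multiset of neighbor contributions, the aggregation $\bigoplus$ (a softmax-weighted sum over $N(u)$) is invariant to the listing order of neighbors, and neither $M$ nor $U$ reads node indices. Because $\pi$ preserves adjacency, it maps $N(u)$ bijectively onto $N(\pi(u))$ with identical attached features, so the aggregated message and hence the updated feature at $\pi(u)$ equal those at $u$. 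If several MP layers are stacked, equivariance composes layer by layer. The original label $x_u$ is likewise carried to position $\pi(u)$ by definition of the action.

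Next I would handle the generator assembly. By construction $g_\theta$ acts node-wise: at a node of type $k$ it selects $g_k$ using the original label $x_u$ and samples a feature from the type-$k$ pool conditioned on the updated vector $h_u$, where the pools and weights $\theta=\{\theta_k\}_{k=1}^K$ are shared across nodes and reference no indices. After applying the previous step, node $\pi(u)$ in $\pi\cdot S$ carries the same type $k$ and the same updated vector $h_u$ as node $u$ in $S$, hence it selects the same $g_k$ and samples from the same conditional distribution over the same pool. Therefore the distribution of the feature assigned to $\pi(u)$ under $\pi\cdot S$ coincides with that assigned to $u$ under $S$, which is the claimed equivariance.

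The main obstacle is making the statement precise for a randomized map, since each $g_k$ contains a sampling step (the Gumbel-softmax trick). I would resolve this by treating the per-node sampling noise as an auxiliary input that is permuted together with the nodes: under coupled noise $g_\theta$ becomes a deterministic node-wise function and the argument above gives exact equivariance, and marginalizing over the (i.i.d.\ across nodes, hence exchangeable) noise yields equality of output distributions, $g_\theta(U(\pi\cdot S))\stackrel{d}{=}\pi\cdot g_\theta(U(S))$. Everything else is routine bookkeeping: checking that the softmax attention aggregation is genuinely order-independent and that no component of the pipeline depends on the node numbering.
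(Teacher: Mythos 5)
Your proposal is correct and follows essentially the same route as the paper's proof: permutation equivariance of the graph-attention message passing gives identical updated vectors at corresponding nodes, label preservation gives the same type-specific generator $g_k$, and hence the same conditional sampling distribution per node. Your explicit coupling of the per-node sampling noise to make the distributional equivariance precise is a welcome refinement of a step the paper states only informally, but it does not change the underlying argument.
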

\begin{proof}
    Consider two graphs, one of which is labeled permuted using function $f$. The graph attention message passing is permutation equivariant, hence labeled permutation equivariance. Which means that the node features $x_u$ after Phase 2, part 1 is the same as $x_{f(u)}$ for permuted node $f(u)$. Also, the label for $u$ and $f(u)$ are same, say $l$, since we consider labeled permutation. Thus, the input to the conditional generator of type $l$ is same $x_u = x_{f(u)}$. Then, the probability distribution over node features for node $u$ and node $f(u)$ in permuted graph will also be the same. Thus, the Phase 2 generator is {\color{black} labeled permutation equivariant}.
\end{proof}
\begin{lemma}
Phase 2 discriminator is labeled permutation invariant.
\end{lemma}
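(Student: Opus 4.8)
The plan is to mirror the structure of the equivariance proof for the Phase 2 generator, but to track the output all the way through to the final scalar probability and argue invariance rather than equivariance. I would start by fixing two isomorphic labeled graphs $G$ and $G'$, where $G'$ is obtained from $G$ by a label-preserving, edge-preserving bijection $f$ on the nodes. The goal is to show $D_w(F(U_h(G))) = D_w(F(U_h(G')))$. Since $D_w$ is a fixed fully connected network applied to its input vector, it suffices to show that the graph-level embedding is unchanged by $f$, i.e.\ $F(U_h(G)) = F(U_h(G'))$.

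The argument then proceeds in two steps following the composition $F \circ U_h$. First, I would invoke the fact (analogous to the generator case, and established for graph-attention message passing in~\citep{brody2021attentive,hu2020heterogeneous}) that the heterogeneous MP module $U_h$ is labeled permutation equivariant: the updated latent vector attached to node $u$ in $U_h(G)$ equals the updated latent vector attached to node $f(u)$ in $U_h(G')$, and because $f$ preserves labels, $u$ and $f(u)$ carry the same node type. Second, I would analyze the aggregation map $F$, which averages the latent node vectors within each type and concatenates these per-type averages in a fixed canonical order of types. The key observation is that for each type $k$, the set of latent vectors of type-$k$ nodes in $U_h(G)$ is exactly the image under the node relabeling of the set of type-$k$ vectors in $U_h(G')$; since $f$ restricted to each type class is a bijection of that class onto the corresponding class in $G'$, the two multisets of type-$k$ latent vectors are identical. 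An average over a multiset is invariant to reordering of its elements, so the per-type average is identical in both graphs, and since the concatenation order is determined by the (permutation-invariant) type index rather than by node identity, the concatenated embedding $F(U_h(G))$ coincides with $F(U_h(G'))$.

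The main obstacle, and the point I would be most careful about, is the interaction between the per-type averaging and the labeled (as opposed to plain) permutation: I must ensure that $f$ not only permutes nodes but maps each type class onto the same type class, so that the average computed within type $k$ in $G$ really is taken over the same multiset as in $G'$. This is exactly where the label-preserving clause of labeled isomorphism is used, and without it a node could migrate between type buckets and break the equality. Once this is established, invariance of $D_w(F(U_h(\cdot)))$ is immediate because composing the identical embedding with the deterministic discriminator network yields the same scalar output.

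I would conclude by noting that this gives labeled permutation invariance of the Phase 2 discriminator, and remark that it is consistent with the earlier observation that plain permutation invariance implies labeled permutation invariance, although here we genuinely need the labeled structure because $F$ aggregates type-wise.
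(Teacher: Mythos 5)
Your proposal is correct and follows essentially the same route as the paper's proof: establish labeled permutation equivariance of the heterogeneous MP module, then observe that per-type averaging followed by concatenation in a fixed type order is invariant under a label-preserving node bijection, and conclude since the discriminator network is deterministic. The only difference is that you cite the equivariance of $U_h$ rather than sketching it via the type-specific attention weights and sum aggregation as the paper does, and you make the (correct and worthwhile) point about why label preservation is needed for the type-bucket averages slightly more explicit.
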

\begin{proof}
    Consider two graphs, one of which is labeled permuted using function $f$. The heterogeneous graph attention (HGT)~\cite{hu2020heterogeneous} message passing has operations that are specific to the node type (of the nodes involved). {\color{black} Here, we do not consider edge types since our problem has only node types, thus we treat all edge types in~\cite{hu2020heterogeneous} as just one type (i.e., hence, edge type specific weight matrices are all the same matrix)}.  The labeled permutation also maps the label as is, thus, all the attention weights computed and the messages computed are same for a target node $u$ and $f(u)$ in the permuted graph. The aggregate operation in HGT is just a sum which is permutation invariant. Thus, the updated node vector after one round of HGT is same for all $u$ and $f(u)$ in permuted graph. {\color{black} Hence, HGT is permutation equivariant when graphs have node types only}. Then, after HGT, our operation of averaging node vectors over types (labels) and concatenating them (in a fixed order of types) is invariant to isomorphic labeled graphs. Given this, the input to the classifier discriminator neural network is same for two labeled permuted graphs and hence the output will also be same.
\end{proof}

\section{Experiments}\label{sec:experiments}

\textbf{Data}:
In this work, we use DBLP and IMDB datasets modified from~\cite{fu2020magnn} and~\cite{Fey/Lenssen/2019}. These datasets can be accessed from the Pytorch Geometric package. The whole DBLP dataset is a heterogeneous graph consisting of authors (4,057 nodes), papers (14,328 nodes), terms (7,723 nodes), and conferences (20 nodes). The authors are divided into four research areas (database, data mining, artificial intelligence, information retrieval). Each author node is described by a bag-of-words representation~\citep{zhang2010understanding} of their paper keywords - a vector of size 334. Each paper is described using a bag-of-words representation of the title - a vector of size 4231. Each term is described using GloVe vectors~\citep{pennington2014glove}, which is a vector of size 50. 

The DBLP graph is very large and therefore, we have used some categorical features sourced from the original DBLP data from the DBLP website, a snapshot release of February 2024, to split the large graph into smaller components each corresponding to a unique category of the features. For this, we extract a categorical feature which is the type of the publication. With our available hardware, we are able to process graphs up to 200 nodes.
For the DBLP dataset, the split using the values of author research area (indicated as author) and values of conference, yielded 25 graphs with less than 200 nodes. Additionally, the split using the values of author research area (indicated as author), values of conference, and values of publication type, yielded 77 graphs with less than 200 nodes. These are the two datasets of DBLP based graphs we use in our experiments.

Similarly, the IMDB dataset is also modified. The whole IMDB dataset is a heterogeneous graph consisting of three types of nodes: movies (4,278 nodes), actors (5,257 nodes), and directors (2,081 nodes). The movies are divided into three classes (action, comedy, drama) according to their genre. Movie features have a bag-of-words representation of its plot keywords - a vector of size 3066. Features are assigned to directors and actors as the means of their associated movies' features, which are vectors of size 3066. 
We extract three categorical features, namely, year, language and country. Then, we used the movie classes (indicated as movie), and the three newly extracted categorical features to split the larger graph into a set of smaller graphs. After the processing stage, the number of graphs with less than 200 nodes split by year is 64, by country, year is 441, by country, language, movie is 179 and by country, language, movie, year is 802.

\textbf{Experiment Set-up}: 
All our experiments were performed on a GPU with 16 GB RAM, except the first phase of our approach using DiGress, which is particularly memory intensive, and required GPUs with 32 GB of RAM. The CUDA version of our GPUs was 12.2. All the experiments were performed with the following split for every dataset: 72\% (90\% of 80\%) train, 8\% (10\% of 80\%) validation and 20\% test. The two phases of our experiments are trained independently and therefore the hyperparameter decisions for each phase are also made independently. For the first phase, variations of the learning rate, and optimizer of the DiGress model were tried and the best settings were chosen after a grid search for each of the datasets. This best setting was used to produce the skeleton graphs for the second phase. In a similar vein, for the second phase, variations of the learning rate, and optimizer were tried along with the overall number of parameters in the model. 
Post grid search the best setting was chosen based on the evaluation metrics to produce the final graphs. The hyperparameters are in the appendix.


To evaluate our model and compare with the baselines, we use the new metrics proposed for heterogeneous graphs in Section~\ref{Evaluation Metrics} (called Type Degree and Feat. EMD in tables).  We also use a number of standard metrics used in the literature. All our metrics are computed on samples of real and generated graphs; we generate 10 such sample sets and report the mean and standard deviation in our results. Note that all these metrics in the literature are Maximum Mean Discrepancy (MMD) distance computed for the quantities listed below. The MMD is estimated using these quantities computed for each of the sampled generated and real graphs with 500 sampled graphs. These quantities are 
(1) Degree Distribution (called Degree Dist. in tables), (2) Clustering Coefficient (called Clust. Coeff. in tables), (3) Spectral Density (called Spect. Dens. in tables), (4) Motif Distribution (called Motif Dist. in tables), and (5) Orbit Distribution (called Orbit Dist. in tables). In general, all of these metrics capture the structural similarity of the generated graphs to that of the real graphs in the dataset. All the metrics (including new ones) are better when lower.

\textbf{Baselines}: As stated earlier, there is only one prior work~\citep{ling2021deep} on heterogeneous graph generation and as described in related work, this work is not applicable to training over multiple graphs. Hence, we make two baselines by modifying the prior work VGAE~\citep{kipf2016variational} for our scenario. VGAE is a Variational Autoencoder based homogeneous graph generative model. They make use of homogeneous MP to compute intermediate latent vectors which are then used to construct a new adjacency matrix.
We trained VGAE to produce skeleton graphs for our datasets. Then, in order for VGAE to work with heterogeneous graphs, we added a random feature vector assignment from the node pool of the appropriate node type as a post-processing module to construct a heterogeneous graph. It is important to note that node types in VGAE are copied from the input and are not manipulated in its generation process (only edges are manipulated), providing it with significant advantage as a baseline. 

 \begin{table}[t]
 \centering
\setlength{\tabcolsep}{2.5pt}
\caption{IMDB results for four splits (bold shows best results).} \label{tab:IMDBmain}
\begin{tabular}{l c c c c c c}
\toprule
 & \multicolumn{3}{c}{Year} & \multicolumn{3}{c}{Country and Year} \\
 \cmidrule(lr){2-4}
\cmidrule(lr){5-7}
Metrics & VGAE & VGAE-H & \modelname & VGAE & VGAE-H & \modelname\\
\midrule 
Degree Dist. 
& 0.35 $\!\pm\!$ 0.004 
& 0.34 $\!\pm\!$ 0.008 
& \textbf{0.18 $\!\pm\!$ 0.001}

& 0.29 $\!\pm\!$ 0.003 
& 0.29 $\!\pm\!$ 0.003 
& \textbf{0.15 $\!\pm\!$ 4.8e-4} \\

Clust. Coeff. 
& 0.92 $\!\pm\!$ 0.024 
& 0.88 $\!\pm\!$ 0.015 
& \textbf{5.2e-5 $\!\pm\!$ 1.4e-5} 

& 0.75 $\!\pm\!$ 0.007 
& 0.73 $\!\pm\!$ 0.009 
& \textbf{0.00 $\pm$ 0.000} \\

Spect. Dens. 
& 0.38 $\!\pm\!$ 0.013 
& 0.36 $\!\pm\!$ 0.011 
& \textbf{0.06 $\!\pm\!$ 0.003} 

& 0.25 $\!\pm\!$ 0.003 
& 0.25 $\!\pm\!$ 0.003 
& \textbf{0.05 $\pm$ 0.001} \\

Motif Dist. 
& 0.67 $\!\pm\!$ 0.014 
& 0.58 $\!\pm\!$ 0.023 
& \textbf{2.3e-4 $\!\pm\!$ 8.3e-5} 

& 0.20 $\!\pm\!$ 0.007 
& 0.18 $\!\pm\!$ 0.006 
& \textbf{3.8e-3 $\pm$ 3.8e-4} \\

Orbit Dist. 
& 0.65 $\!\pm\!$ 0.020 
& 0.54 $\!\pm\!$ 0.026 
& \textbf{4.6e-3 $\!\pm\!$ 2.4e-4} 

& 0.20 $\!\pm\!$ 0.004 
& 0.17 $\!\pm\!$ 0.004 
& \textbf{1.2e-3 $\pm$ 3.5e-5} \\

Type Degree 
& 0.28 $\!\pm\!$ 0.007 
& 0.26 $\!\pm\!$ 0.007 
& \textbf{0.05 $\!\pm\!$ 3.5e-4} 

& 0.19 $\!\pm\!$ 0.001 
& 0.19 $\!\pm\!$ 0.001 
& \textbf{0.04 $\pm$ 0.001} \\

Feat. EMD 
& 0.96 $\!\pm\!$ 0.001 
& 0.96 $\!\pm\!$ 0.001 
& \textbf{0.95 $\!\pm\!$ 0.001} 

& 0.91 $\!\pm\!$ 0.001 
& 0.90 $\!\pm\!$ 0.001 
& \textbf{0.90 $\pm$ 0.002} \\

\bottomrule
\toprule
\end{tabular}
\begin{tabular}{l c c c c c c}
 & \multicolumn{3}{c}{Country, Language and Movie} & \multicolumn{3}{c}{Country, Language, Movie and Year} \\
 \cmidrule(lr){2-4}
\cmidrule(lr){5-7}
Metrics & VGAE & VGAE-H & \modelname & VGAE & VGAE-H & \modelname\\
\midrule
Degree Dist. 
& 0.30 $\!\pm\!$ 0.002 
& 0.30 $\!\pm\!$ 0.003 
& \textbf{0.15 $\!\pm\!$ 0.001} 

& 0.28 $\!\pm\!$ 0.002 
& 0.28 $\!\pm\!$ 0.002 
& \textbf{0.14 $\pm$ 3.6e-4} \\

Clust. Coeff. 
& --- $\!\pm\!$ --- 
& 0.75 $\!\pm\!$ 0.012 
& \textbf{7.8e-6 $\!\pm\!$ 4.6e-6} 

& 0.67 $\!\pm\!$ 0.008 
& 0.66 $\!\pm\!$ 0.007 
& \textbf{0.00 $\pm$ 0.000} \\

Spect. Dens. 
& 0.23 $\!\pm\!$ 0.003 
& 0.24 $\!\pm\!$ 0.004 
& \textbf{0.04 $\!\pm\!$ 0.002} 

& 0.22 $\!\pm\!$ 0.002 
& 0.22 $\!\pm\!$ 0.002 
& \textbf{0.04 $\pm$ 2.2e-4} \\

Motif Dist. 
& --- $\!\pm\!$ --- 
& 0.15 $\!\pm\!$ 0.008 
& \textbf{0.01 $\!\pm\!$ 0.001} 

& 0.14 $\!\pm\!$ 0.003 
& 0.12 $\!\pm\!$ 0.006 
& \textbf{1.4e-3 $\pm$ 1.7e-4} \\

Orbit Dist. 
& --- $\!\pm\!$ --- 
& 0.14 $\!\pm\!$ 0.007 
& \textbf{4.5e-3 $\!\pm\!$ 3.4e-4} 

& 0.12 $\!\pm\!$ 0.002 
& 0.11 $\!\pm\!$ 0.003 
& \textbf{1.1e-3 $\pm$ 3.5e-5} \\

Type Degree 
& 0.18 $\!\pm\!$ 0.002 
& 0.17 $\!\pm\!$ 0.001 
& \textbf{0.03 $\!\pm\!$ 2.3e-4} 

& 0.18 $\!\pm\!$ 0.001 
& 0.18 $\!\pm\!$ 4.6e-4 
& \textbf{0.03 $\pm$ 2.7e-4} \\

Feat. EMD 
& 0.92 $\!\pm\!$ 0.002 
& 0.92 $\!\pm\!$ 0.002 
& \textbf{0.90 $\!\pm\!$ 0.002} 

& 0.87 $\!\pm\!$ 4.8e-4\!
& \textbf{0.87 $\!\pm\!$ 0.001} 
& 0.88 $\pm$ 0.001 \\
\bottomrule
\end{tabular}
\end{table}

\begin{table}[t]
\centering
\setlength{\tabcolsep}{2.5pt}
\caption{DBLP results for two splits (bold shows best results).} \label{tab:DBLPmain}
\begin{tabular}{l c c c c c c}
\toprule
 & \multicolumn{3}{c}{Author and Conference} & \multicolumn{3}{c}{Author, Conference, and Pub. Type} \\
 \cmidrule(lr){2-4}
\cmidrule(lr){5-7}
Metrics & VGAE & VGAE-H & \modelname & VGAE & VGAE-H & \modelname\\
\midrule 
Degree Dist. 
& 0.41 $\!\pm\!$ 0.002 
& 0.43 $\!\pm\!$ 0.007 
& \textbf{0.16 $\!\pm\!$ 0.004} 

& 0.36 $\!\pm\!$ 0.003 
& 0.35 $\!\pm\!$ 0.006 
& \textbf{0.01 $\pm$ 0.001} \\

Clust. Coeff. 
& --- $\!\pm\!$ ---  
& 1.08 $\!\pm\!$ 0.007
& \textbf{3.6e-4 $\!\pm\!$ 4.7e-5} 

& --- $\!\pm\!$ --- 
& 0.92 $\!\pm\!$ 0.018 
& \textbf{6.0e-5 $\pm$ 1.3e-5} \\

Spect. Dens. 
& 0.48 $\!\pm\!$ 0.023 
& 0.52 $\!\pm\!$ 0.020 
& \textbf{0.07 $\!\pm\!$ 0.006} 

& 0.38 $\!\pm\!$ 0.015 
& 0.36 $\!\pm\!$ 0.006 
& \textbf{1.2e-3 $\pm$ 3.2e-4} \\

Motif Dist. 
& --- $\!\pm\!$ --- 
& 0.80 $\!\pm\!$ 0.033 
& \textbf{0.15 $\!\pm\!$ 0.011} 

& --- $\!\pm\!$ --- 
& 0.43 $\!\pm\!$ 0.027 
& \textbf{6.6e-4 $\pm$ 1.7e-4} \\

Orbit Dist. 
& --- $\!\pm\!$ --- 
& 0.80 $\!\pm\!$ 0.023 
& \textbf{0.18 $\!\pm\!$ 0.006} 

& --- $\!\pm\!$ --- 
& 0.42 $\!\pm\!$ 0.013 
& \textbf{0.01 $\pm$ 0.003} \\

Type Degree 
& 0.30 $\!\pm\!$ 0.004 
& 0.29 $\!\pm\!$ 0.016 
& \textbf{0.04 $\!\pm\!$ 0.001} 

& 0.22 $\!\pm\!$ 0.002 
& 0.19 $\!\pm\!$ 0.003 
& \textbf{1.5e-3 $\pm$ 9.3e-5} \\

Feat. EMD 
& 3.44 $\!\pm\!$ 0.010
& 3.58 $\!\pm\!$ 0.011 
& \textbf{3.44 $\!\pm\!$ 0.007} 

& 3.49 $\!\pm\!$ 0.011 
& 3.51 $\!\pm\!$ 0.016 
& \textbf{3.47 $\pm$ 0.007} \\
\bottomrule
\end{tabular}
\end{table}


\begin{table}[t]
\centering
\setlength{\tabcolsep}{2.5pt}
\caption{Ablation : IMDB results for four splits} \label{tab:ablateIMDB}
\begin{tabular}{l c c c c c c}
\toprule
 & \multicolumn{3}{c}{Year} & \multicolumn{3}{c}{Country and Year} \\
 \cmidrule(lr){2-4}
\cmidrule(lr){5-7}
Metrics & NoMP & NoPool & \modelname & NoMP & NoPool & \modelname\\
\midrule

Feat. EMD 
& 0.96 $\!\pm\!$ 0.001 
& 1.00 $\!\pm\!$ 0.000 
& \textbf{0.95 $\!\pm\!$ 0.001} 

& \textbf{0.89 $\!\pm\!$ 0.001} 
& 1.00 $\!\pm\!$ 3.0e-4 
& 0.90 $\pm$ 0.002\\

\bottomrule
\toprule
\end{tabular}
\begin{tabular}{l c c c c c c}
 & \multicolumn{3}{c}{Country, Language and Movie} & \multicolumn{3}{c}{Country, Language, Movie and Year} \\
 \cmidrule(lr){2-4}
\cmidrule(lr){5-7}
Metrics & NoMP & NoPool & \modelname & NoMP & NoPool & \modelname\\
\midrule

Feat. EMD 
& 0.91 $\!\pm\!$ 0.001 
& 1.00 $\!\pm\!$ 4.2e-4  
& \textbf{0.90 $\!\pm\!$ 1.6e-3} 

& 0.88 $\!\pm\!$ 0.001 
& 1.00 $\!\pm\!$ 1.2e-4  
& \textbf{0.88 $\pm$ 0.001} \\
\bottomrule
\end{tabular}
\end{table}

\begin{table}[t]
\centering
\setlength{\tabcolsep}{2.5pt}
\caption{Ablation : DBLP results for two splits} \label{tab:ablateDBLP}
\begin{tabular}{l c c c c c c}
\toprule
 & \multicolumn{3}{c}{Author and Conference} & \multicolumn{3}{c}{Author, Conference, and Pub. Type} \\
 \cmidrule(lr){2-4}
\cmidrule(lr){5-7}
Metrics & NoMP & NoPool & \modelname & NoMP & NoPool & \modelname\\
\midrule

Feat. EMD 
& 3.45 $\!\pm\!$ 0.013
& 3.67 $\!\pm\!$ 0.010  
& \textbf{3.44 $\!\pm\!$ 0.007} 

& 3.48 $\!\pm\!$ 0.009
& 3.63 $\!\pm\!$ 0.005  
& \textbf{3.47 $\pm$ 0.007} \\
\bottomrule
\end{tabular}
\end{table}

We also construct a hybrid of VGAE and our \modelname, which we call VGAE-H.
In VGAE-H, we use VGAE in our first phase, instead of DiGress, and then use our model's second phase as a feature vector assignment module.

\textbf{Results}: Our results are presented in Table~\ref{tab:IMDBmain} for IMDB graphs and Table~\ref{tab:DBLPmain} for DBLP graphs. The results show that our approach consistently outperforms the baselines, except for one case where the hybrid approach does better. We note that some results could not be produced ($-$ in the table) for the size of graphs that we have in our experiments, given a 24-hour cutoff on the CPU and GPU. In particular, our performance is much better on the type degree distribution, showing a better modelling of the distribution of node types and connecting edges in the graphs generated by our approach.

\textbf{Ablation Study}:
In our ablation studies, we look at two components of our second phase, namely, the MP module in Phase 2, part 1 and the node pools in Phase 2, part 2 (see Figure~\ref{fig:generation}). 
In our first ablation study, we remove the MP module in Phase 2, part 1 and re-run the experiments. As a result of this removal, the sampling modules now directly interact with the skeleton graphs and the node type vectors. We call this as NoMP in the result tables.
In our second ablation study, we remove the node feature pools and replace them with dense neural networks that directly outputs the node features. We call this as NoPool in the result tables. The ablation results are shown in Tables~\ref{tab:ablateIMDB} and~\ref{tab:ablateDBLP}. We present results for Feature EMD only as the other metrics depend only on the skeleton graph output from Phase 1, which is something we do not change in our ablation set-up.
These results show that our architectural choices are important in getting the desired performance. In particular, the node feature pool is critical.

\section {Limitations and Discussions} \label{sec:limit}

An important limitation, and potential future work, is to design our approach to explicitly work with edge types. However, we do note that popular heterogeneous graph datasets in Pytorch Geometric and used in the literature do not have such explicit edge types that are not already implied by the node types. This direction is interesting future work as it also requires expanding on the labeled permutation equivariance and invariance that we have defined, though we note that our transformation of edge labeled graphs to only node labeled graphs at the start of Section~\ref{sec:methodology} may be the path towards this theoretical extension.

Another potential refinement is to have iterative refinement of feature vector assignments. Vectors once assigned are not revisited in our current approach. Our assignment is one shot in the sense that all of them are done in parallel at one go. Some options include to perform iterative refinements of our assignments, i.e., we still assign all of them at once, but then iteratively improve the assignment by updating them all again. Another option is to sequentially assign the feature vectors, that is assign feature vectors to some nodes first, and then based on such previous assignments, assign feature vectors to their neighbors and so on.

Overall, our work provides a hierarchical approach that successfully generates heterogeneous graphs with promising results for the domains that we experimented with. A number of potential extensions of our work provides ground for further research on this topic.


\bibliography{bibliography}

\appendix
\newpage

\section{Data Description}

The datasets used in this work is collectively sourced and modified from \cite{fu2020magnn}, and \cite{Fey/Lenssen/2019}. The former is one of the first attempts that made use of the datasets and has the raw data available with them, while the latter is a PyTorch Geometric framework that is an indispensable tool in today's graph machine learning scenario. Specifically, we look at the Digital Bibliography \& Library Project (DBLP) dataset and the Internet Movie Database (IMDB) dataset. The original DBLP and IMDB datasets can be accessed \href{https://pytorch-geometric.readthedocs.io/en/latest/generated/torch_geometric.datasets.DBLP.html#torch_geometric.datasets.DBLP}{here} and \href{https://pytorch-geometric.readthedocs.io/en/latest/generated/torch_geometric.datasets.IMDB.html#torch_geometric.datasets.IMDB}{here}, respectively.

The original datasets, though quite convenient to use, contain only a single large heterogeneous graph, and the features available for segmenting these graphs are limited. To address this limitation, we obtained raw data from~\citep{fu2020magnn} and \href{https://dblp.org/xml/release/dblp-2024-02-01.xml.gz}{snapshot release of DBLP February 2024} to enrich the dataset with additional categorical features. This allowed us to divide the large, singular heterogeneous graph into multiple smaller ones, providing us with a more diverse and comprehensive dataset for our experiments.

\subsection{Digital Bibliography \& Library Project (DBLP)}

The original DBLP dataset is a heterogeneous graph consisting of authors (4,057 nodes), papers (14,328 nodes), terms (7,723 nodes), and conferences (20 nodes). The authors are divided into four research areas (database, data mining, artificial intelligence, information retrieval). Each author is described by a bag-of-words representation~\citep{zhang2010understanding} of their paper keywords. Each paper is described using a bag-of-words representation of their paper titles. Each term is described using GloVe vectors \cite{pennington2014glove}. The feature vector sizes are summarised in Table \ref{tab:DBLP}.

\begin{table}[!htbp]
  \caption{DBLP graph description}
  \label{tab:DBLP}
  \centering
  \begin{tabular}{lcc}
    \toprule
    Node type & Representation & Size of feature vector \\
    \midrule
    author & Bag-Of-Words of paper keywords & 334 \\
    paper & Bag-Of-Words of paper titles & 4231 \\
    term & GloVe vector & 50 \\
    \bottomrule
  \end{tabular}
\end{table}

We have modified the DBLP dataset to include another categorical feature which is the type of the publication. The original graph is very large and therefore, we have used some categorical features and their combinations to segregate the large graph into smaller components each corresponding to a unique category of the feature. For the DBLP dataset, we have used the author research area (indicated as author), conference, and type of paper publication and their combinations to obtain a set of smaller graphs. Furthermore, due to resource constraints, we have limited ourselves to use only those graphs that have number of nodes less than or equal to 200. Post this processing stage, the number of graphs in each set is highlighted in Table \ref{tab:DBLP_graphs}. Examples of the graphs from the split criteria - (author, conference and type) are visualized in Figure \ref{fig:DBLP_author_conference_type_viz}. In all DBLP graphs, authors are in red, papers in blue and terms are in green.

\begin{table}[!htbp]
  \caption{DBLP graph sets}
  \label{tab:DBLP_graphs}
  \centering
  \begin{tabular}{lc}
    \toprule
    Split criteria & Number of graphs with nodes \(\leq\) 200 \\
    \midrule
    No Split & 0 \\
    author & 0 \\
    conference & 1 \\
    type & 4 \\
    author and conference & 25 \\
    author and type & 8 \\
    conference and type & 29 \\
    author, conference and type & 77 \\
    \bottomrule
  \end{tabular}
\end{table}

\begin{figure}
    \centering
    \begin{subfigure}[t]{0.27\textwidth}
        \fbox{\includegraphics[height=0.75\textwidth]{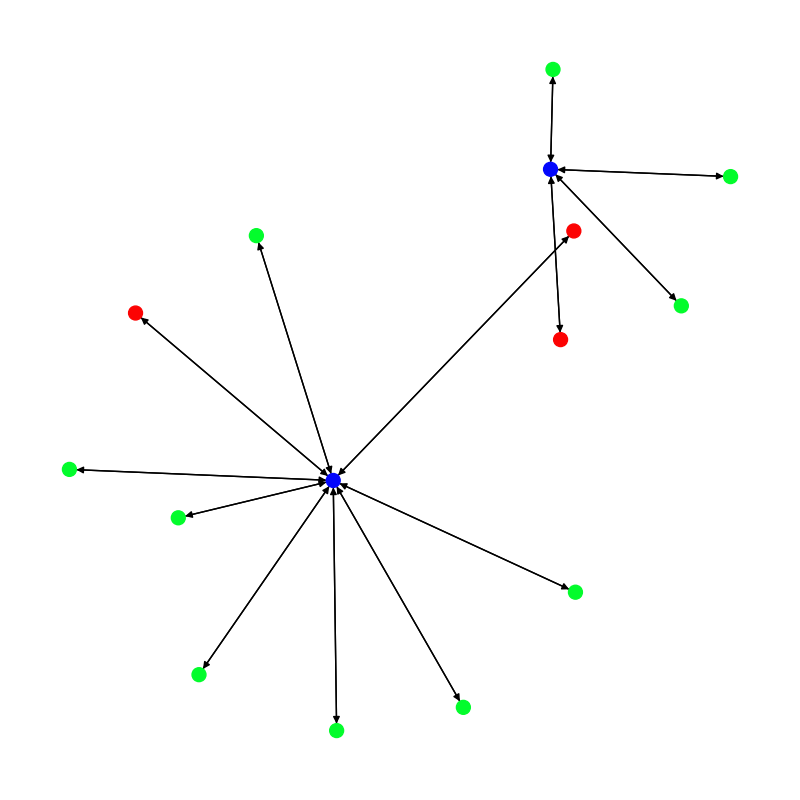}}
        
    \end{subfigure}\hspace{0.05\textwidth} 
    \begin{subfigure}[t]{0.27\textwidth}
        \fbox{\includegraphics[height=0.75\textwidth]{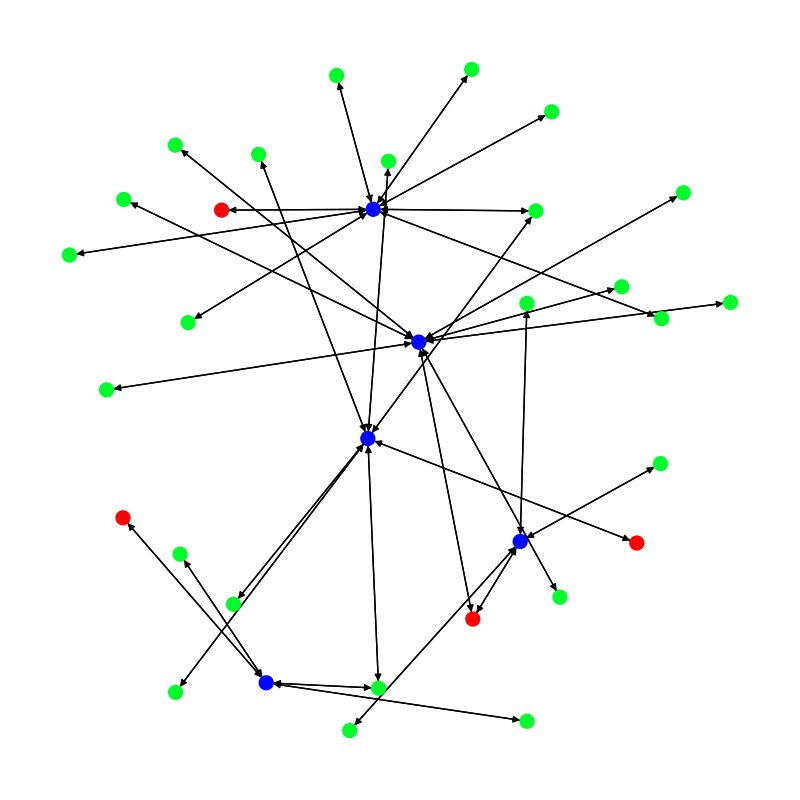}}
        
    \end{subfigure}\hspace{0.05\textwidth} 
    \begin{subfigure}[t]{0.27\textwidth}
        \fbox{\includegraphics[height=0.75\textwidth]{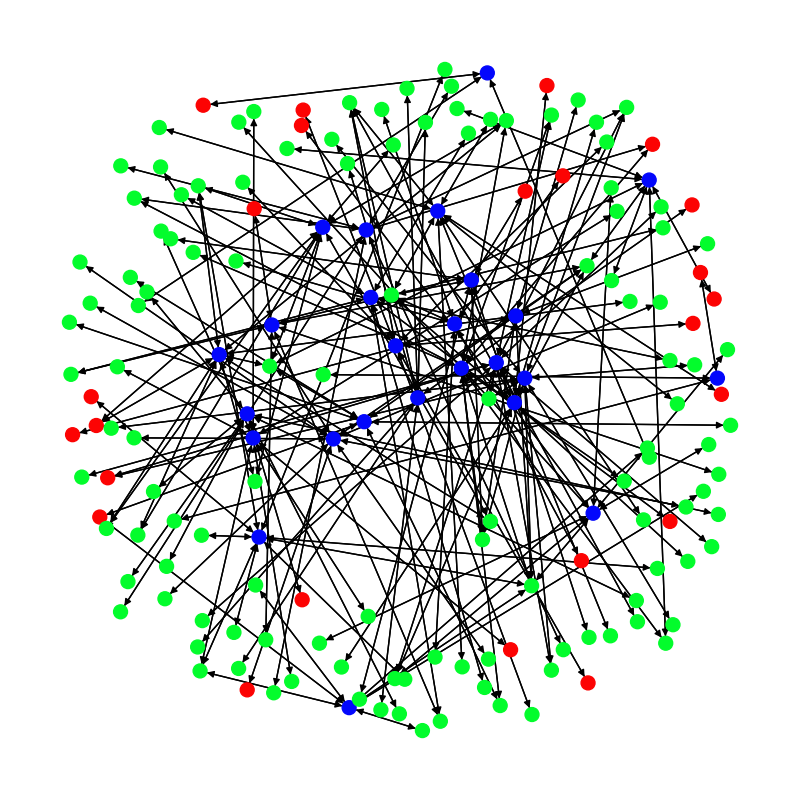}}
        
    \end{subfigure}

    \caption{Example of DBLP graphs from the split criteria - (author, conference and type)}
    \label{fig:DBLP_author_conference_type_viz}
\end{figure}

\subsection{Internet Movie Database (IMDB)}

The original IMDB dataset is a heterogeneous graph consisting of three types of entities - movies (4,278 nodes), actors (5,257 nodes), and directors (2,081 nodes). The movies are divided into three classes (action, comedy, drama) according to their genre. Movie features correspond to elements of a bag-of-words representation of its plot keywords. Features are assigned to directors and actors as the means of their associated movies' features. The feature vector sizes are summarized in Table \ref{tab:IMDB}.

\begin{table}[!htbp]
  \caption{IMDB graph description}
  \label{tab:IMDB}
  \centering
  \begin{tabular}{llc}
    \toprule
    Node type & Representation & Size of feature vector \\
    \midrule
    movie & Bag-Of-Words of plot keywords & 3066 \\
    actor & Mean of associated movies' features & 3066 \\
    director & Mean of associated movies' features & 3066 \\
    \bottomrule
  \end{tabular}
\end{table}

We have modified the dataset to include three categorical features, namely, year, language and country. For the IMDB dataset, we have used the movie classes (indicated as movie), and the three newly included categorical features and their combinations to segregate the larger graph into a set of smaller graphs. We have further restricted ourselves to only those graphs that have number of nodes less than or equal to 200, due to computational resource constraint. After the processing stage, the number of graphs in each set is highlighted in Table \ref{tab:IMDB_graphs}. Examples of the graphs from the split criteria - (movie, year, language and country) are visualized in Figure \ref{fig:IMDB_country_language_movie_year_viz}. In all IMDB graphs, movies are in red, actors in blue and directors are in green.

\begin{table}
  \caption{IMDB graph sets}
  \label{tab:IMDB_graphs}
  \centering
  \begin{tabular}{lc}
    \toprule
    Split criteria & Number of graphs with nodes \(\leq\) 200 \\
    \midrule
    No Split & 0 \\
    movie & 0 \\
    year & 64 \\
    language & 43 \\
    country & 56 \\
    movie and year & 159 \\
    movie and language & 75 \\
    movie and country & 108 \\
    year and language & 246 \\
    year and country & 441 \\
    language and country & 111 \\
    movie, year and language & 382 \\
    movie, year and country & 732 \\
    movie, language and country & 179 \\
    year, language and country & 523 \\
    movie, year, language and country & 802 \\
    \bottomrule
  \end{tabular}
\end{table}

\begin{figure}
    \centering
    \begin{subfigure}[t]{0.27\textwidth}
        \fbox{\includegraphics[height=0.75\textwidth]{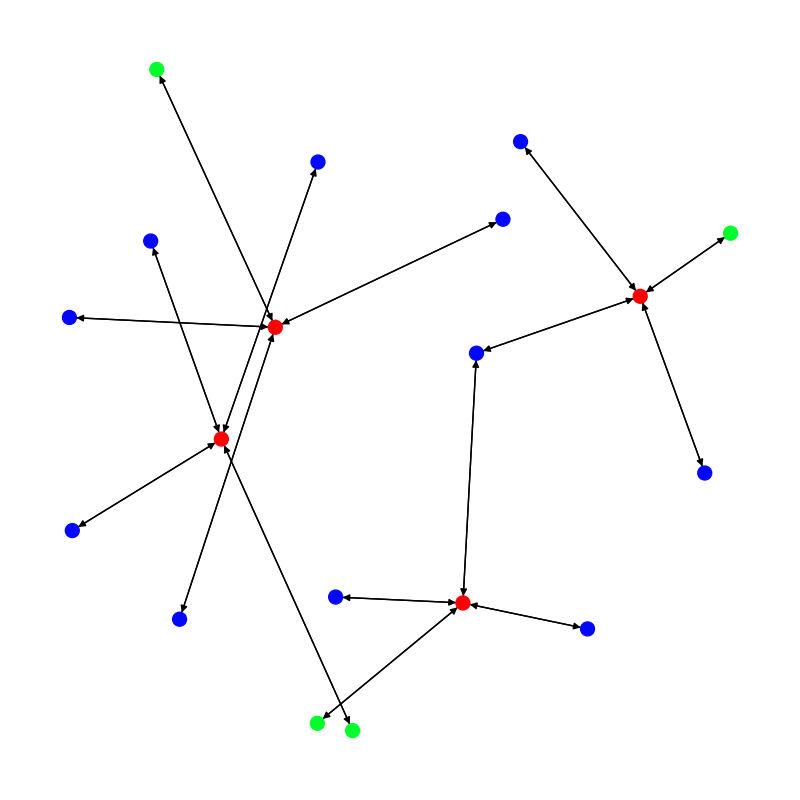}}
        
    \end{subfigure}\hspace{0.05\textwidth} 
    \begin{subfigure}[t]{0.27\textwidth}
        \fbox{\includegraphics[height=0.75\textwidth]{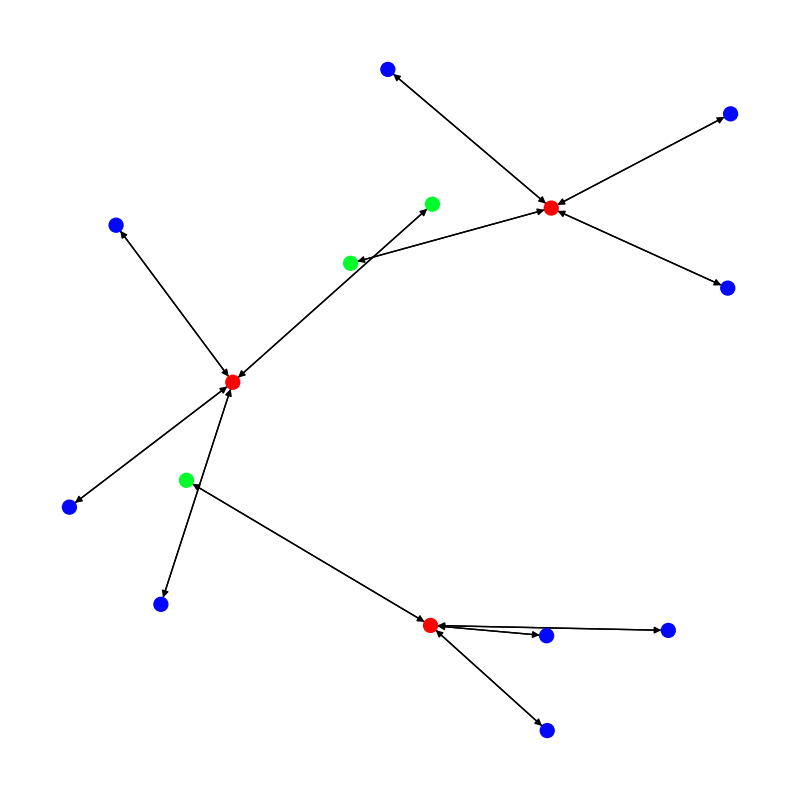}}
        
    \end{subfigure}\hspace{0.05\textwidth} 
    \begin{subfigure}[t]{0.27\textwidth}
        \fbox{\includegraphics[height=0.75\textwidth]{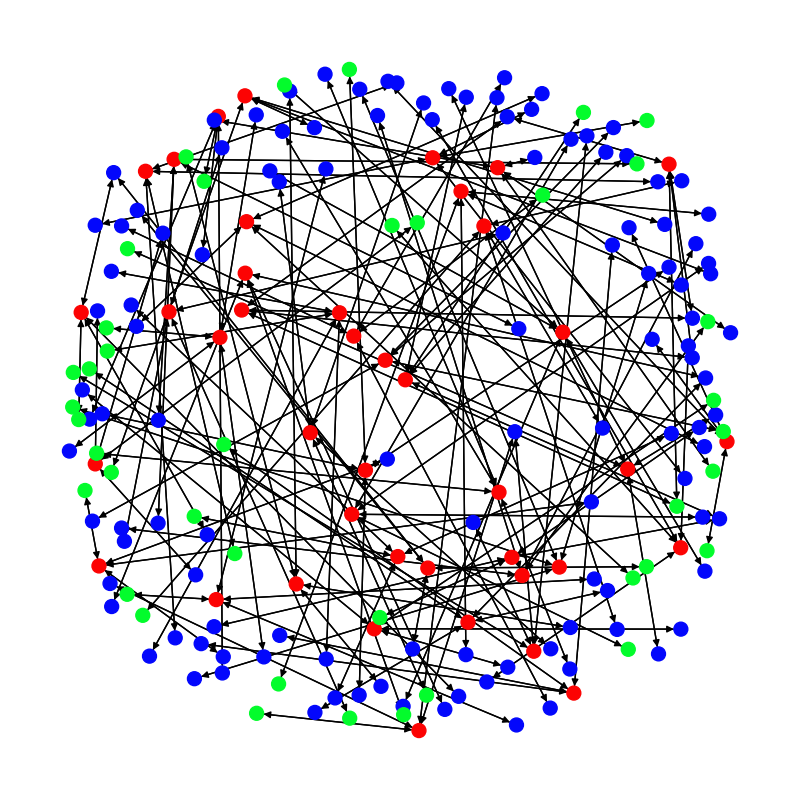}}
        
    \end{subfigure}

    \caption{Example of the IMDB graphs from the split criteria - (movie, year, language and country)}
    \label{fig:IMDB_country_language_movie_year_viz}
\end{figure}


\section{Evaluation Metrics}

In this section, the two new metrics introduced in this paper are explained in detail. 

\subsection{Feature Distribution EMD}

The distribution of feature vectors within a heterogeneous graph is a critical aspect that distinguishes it from metrics focusing solely on structural properties. This novel metric assesses the similarity between the distributions of feature vectors among generated heterogeneous graphs and real ones. In brief, the algorithm is as follows. Given a set of generated graphs \(\mathcal{P}\) and a set of real graphs \(\mathcal{Q}\), for every pair \((p, q) : p \in \mathcal{P}\) and \(q \in \mathcal{Q}\), compute a weighted average of EMD between feature vectors of each node type. This results in a cost matrix which can again be used to compute another EMD between the two whole graph sets. The details of this algorithm are given in Algorithm~\ref{alg:Feature Dist. EMD} and~\ref{alg:Graph Set EMD}, which outline the computation of this metric in pseudocode. The lower the value of the metric, the more similar the distributions are.

\SetKwComment{Comment}{/* }{ */}

\begin{algorithm}
\caption{Feature Distribution EMD between Generated and Real Graph}\label{alg:Feature Dist. EMD}
\KwData{\(p: \text{Generated Graph}, q: \text{Real Graph}\)}
\KwResult{\(emd: \text{Feature Distribution EMD}\)}
\SetKwProg{Fn}{Function}{}{end}
\Fn{FeatureDistEMD($p, q$)}{
    \(total\_nodes\_in\_q \gets 0\)\;
    \For {every node\_type $n$}{
        \(total\_nodes\_in\_q \gets total\_nodes\_in\_q + |q[n][x]|\)\;
    }

    \(total\_nodes\_in\_p \gets 0\)\;
    \For {every node\_type $n$}{
        \(total\_nodes\_in\_p \gets total\_nodes\_in\_p + |p[n][x]|\)\;
    }

    \(total\_emd1 \gets 0\)\;
    \(total\_emd2 \gets 0\)\;
    \For {every node\_type $n$}{
        $x_g \gets p[n][x]$ \Comment*[r]{Feature Matrix from $p$ of type $n$}
        $x_r \gets q[n][x]$ \Comment*[r]{Feature Matrix from $q$ of type $n$}
    
        $M \gets distance(x_g, x_r, metric)$ \Comment*[r]{$metric$ is based on node type $n$}
        a[0 ... $|x_g|$] be a new array\;
        \For {i $\leftarrow$ 0 ... $|x_g|-1$}{
            a[i] \(\gets \frac{1}{|x_g|}\)\;
        }
        
        b[0 ... $|x_r|$] be a new array\;
        \For {i $\leftarrow$ 0 ... $|x_r|-1$}{
            b[i] \(\gets \frac{1}{|x_r|}\)\;
        }
        \(emd \gets ot.emd2(a, b, M)\) \Comment*[r]{Using POT \cite{flamary2021pot} to solve the optimal transport problem and compute EMD}
        
        \(total\_emd1 \gets total\_emd1 + \frac{|q[n][x]|}{total\_nodes\_in\_q} \times emd\)
        
        \(total\_emd2 \gets total\_emd2 + \frac{|p[n][x]|}{total\_nodes\_in\_p} \times emd\)
    }

    \(total\_emd \gets (total\_emd1+total\_emd2)/2\)\;
    
    return \(total\_emd\)\;
}
\end{algorithm}

\begin{algorithm}
\caption{EMD between Generated and Real Graph Sets}\label{alg:Graph Set EMD}
\KwData{\(\mathcal{P}: \text{Set of Generated Graphs}, \mathcal{Q}: \text{Set of Real Graphs}\)}
\KwResult{\(emd: \text{Total EMD}\)}
emd \(\gets 0\)\;
\sbox0{$\vcenter{\hbox{$\begin{array}{|c|c|c|}
  \hline
  0.0 & \ldots & 0.0 \\ \hline
  \vdots & \ddots & \vdots \\ \hline
  0.0 & \ldots & 0.0 \\ \hline
\end{array}$}}$}%
$M \gets
  \underbrace{\vrule width0pt depth \dimexpr\dp0 + .3ex\relax\copy0}_{|\mathcal{Q}|}%
  \left.\kern-\nulldelimiterspace
    \vphantom{\copy0}
  \right\rbrace \scriptstyle|\mathcal{P}|
$\;

\For {r $\leftarrow$ 0 ... $|\mathcal{P}|-1$}{
    \For {c $\leftarrow$ 0 ... $|\mathcal{Q}|-1$}{
        M[r][c] $\leftarrow$ FeatureDistEMD($\mathcal{P}[r]$, $\mathcal{Q}[c]$)\;
    }
}

a[0 ... $|\mathcal{P}|$] be a new array\;
\For {i $\leftarrow$ 0 ... $|\mathcal{P}|-1$}{
    a[i] \(\gets \frac{1}{|\mathcal{P}|}\)\;
}

b[0 ... $|\mathcal{Q}|$] be a new array\;
\For {i $\leftarrow$ 0 ... $|\mathcal{Q}|-1$}{
    b[i] \(\gets \frac{1}{|\mathcal{Q}|}\)\;
}
emd \(\gets ot.emd2(a, b, M)\) \Comment*[r]{Using POT \cite{flamary2021pot} to solve the optimal transport problem and compute EMD}
\end{algorithm}

\subsection{Type Degree Distribution MMD}

This metric measures the distribution of degree of individual node types among the nodes in the graph. Computation of this metric is done as follows. Let there be $K$ number of node types in the graph $G=(V, E)$. Then, for every node $i$, compute the following $K$-length vector wherein every position $j$ in the vector is the number of neighboring nodes of $i$ that are of type $j$. Thus, they are the number of nodes $v$, such that $(i, v) \in E$ and $v$ is of type $j$. Once this vector is computed for all nodes, we have an $n \times K$ matrix for the whole graph. Drawing parallels from degree distribution MMD, note that in that metric, there was only one dimension instead of $K$, i.e., it had an $n \times 1$ dimensional matrix and we have extended it to $K$ types. Now, for each of the $K$ types, compute a degree distribution MMD separately. An average of these $K$ degree distributions is the Type Degree Distribution MMD.


\section{Hyperparameter Choices}

Here, we list the hyperparameter choices made in the first and second phase for all the datasets.

\subsection{Phase 1: Skeleton Graph Generator}
Our first phase is DiGress \cite{vignac2022digress}, and most of the model's parameters are retained to be the default ones as provided by the code of \cite{vignac2022digress}. However, the learning rate and optimizer were grid searched upon and the best model based on negative log-likelihood loss was chosen. The choices for learning rate were $[0.00001, 0.0001 0.0002, 0.001]$ and the choices for the optimizer were $[nadam, adamw]$ \cite{dozat2016incorporating} \cite{loshchilov2017decoupled}. Finally, the choices for generating the samples that were used in the second phase are summarized in Tables \ref{tab:IMDB_digress} and \ref{tab:DBLP_digress}.

\begin{table}
  \caption{Hyperparameter choices of first phase for IMDB}
  \label{tab:IMDB_digress}
  \centering
  \begin{tabular}{lcl}
    \toprule
    Split criteria & Learning Rate & Optimizer\\
    \midrule
    year & 0.0002 & nadam \\
    year and country & 0.0002 & nadam \\
    movie, language and country & 0.0002 & adamw \\
    movie, year, language and country & 0.0002 & adamw \\
    \bottomrule
  \end{tabular}
\end{table}

\begin{table}[!htbp]
  \caption{Hyperparameter choices of first phase for DBLP}
  \label{tab:DBLP_digress}
  \centering
  \begin{tabular}{lcl}
    \toprule
    Split criteria & Learning Rate & Optimizer \\
    \midrule
    author and conference & 0.0002 & nadam \\
    author, conference and type & 0.0002 & nadam \\
    \bottomrule
  \end{tabular}
\end{table}

\subsection{Phase 2: Heterogeneous Feature Vector Assignment}
For the second phase, the hyperparameter choices were extended to include the overall model's parameters along with the learning rate and optimizer. For the sake of brevity, we refer to the overall model's parameters as follows:
\begin{itemize}
    \item d - default
    \item d2 - double the parameters from default
    \item dh2 - double only the number of heads in attention based frameworks in default
\end{itemize}
Note that all the configuration files are provided in the supplementary materials with all the code. The choices for learning rate were $[0.02, 0.0002]$ and the choices for the optimizer were $[adam, radam, nadam, adamw]$ \cite{kingma2014adam} \cite{dozat2016incorporating} \cite{liu2019variance} \cite{loshchilov2017decoupled}. Finally, the choices used to generate the heterogeneous graphs are summarized in Tables \ref{tab:IMDB_fva} and \ref{tab:DBLP_fva}.

\begin{table}
  \caption{Hyperparameter choices of second phase for IMDB}
  \label{tab:IMDB_fva}
  \centering
  \begin{tabular}{lcll}
    \toprule
    Split criteria & Learning Rate & Optimizer & Overall Parameters\\
    \midrule
    year & 0.0002 & nadam & d\\
    year and country & 0.0200 & adam & d\\
    movie, language and country & 0.0002 & adam & d\\
    movie, year, language and country & 0.0200 & adam & dh2\\
    \bottomrule
  \end{tabular}
\end{table}

\begin{table}[!htbp]
  \caption{Hyperparameter choices of second phase for DBLP}
  \label{tab:DBLP_fva}
  \centering
  \begin{tabular}{lcll}
    \toprule
    Split criteria & Learning Rate & Optimizer & Overall Parameters\\
    \midrule
    author and conference & 0.0002 & nadam & d\\
    author, conference and type & 0.0200 & radam & d2\\
    \bottomrule
  \end{tabular}
\end{table}


\section{Single Graph Training}

Even though most graph generative frameworks are trained on multiple graph instances to learn the distributions of graph in the generation process as seen in \cite{guo2022systematic}, some, for instance, \citet{bojchevski2018netgan}, \cite{ling2021deep} and \citet{ling2023motif} use a different approach. In their approach to generate realistic graphs, they train only on one single graph and try to match properties of the generated graphs to this one graph. While our model uses multiple graph instances for training, however, we can also train it on a single graph (as if we have only that graph in the training set). We have extended our experiments to NetGAN~\citep{bojchevski2018netgan} which uses a single graph, and also run VGAE~\citep{kipf2016variational} on a single graph. In all datasets, the largest graph in terms of number of nodes below 200 were used for the purpose of training. The results of these experiments are summarized in Tables \ref{S_imdb} and \ref{S_dblp}.

\begin{table}[t]
\centering
\setlength{\tabcolsep}{3pt}
\caption{Single Graphs : IMDB results for four splits}
\label{S_imdb}
\begin{tabular}{l c c c c c c}
\toprule
 & \multicolumn{3}{c}{Year} & \multicolumn{3}{c}{Country and Year} \\
 \cmidrule(lr){2-4}
\cmidrule(lr){5-7}
Metrics & VGAE & NetGAN & \modelname & VGAE & NetGAN & \modelname\\
\midrule 
Degree Dist. 
& 0.37 $\!\pm\!$ 0.004 
& 0.37 $\!\pm\!$ 0.004  
& \textbf{0.23 $\!\pm\!$ 0.002}  

& \textbf{0.30 $\!\pm\!$ 0.002} 
& 0.40 $\!\pm\!$ 0.003  
& 0.38 $\!\pm\!$ 0.003\\

Clust. Coeff. 
& 0.95 $\!\pm\!$ 0.016 
& 0.73 $\!\pm\!$ 0.008  
& \textbf{6.8e-5 $\!\pm\!$ 7.2e-6} 

& 0.73 $\!\pm\!$ 0.011
& 0.77 $\!\pm\!$ 0.012  
& \textbf{6.1e-5 $\!\pm\!$ 2.6e-5} \\

Spect. Dens. 
& 0.61 $\!\pm\!$ 0.178 
& 0.64 $\!\pm\!$ 0.168  
& \textbf{0.22 $\!\pm\!$ 0.131}  

& 0.61 $\!\pm\!$ 0.081 
& 0.68 $\!\pm\!$ 0.134  
& \textbf{0.39 $\!\pm\!$ 0.053} \\

Motif. Dist. 
& 0.74 $\!\pm\!$ 0.023 
& 0.44 $\!\pm\!$ 0.008  
& \textbf{8.0e-6 $\!\pm\!$ 6.6e-7}  

& 0.25 $\!\pm\!$ 0.009 
& 0.52 $\!\pm\!$ 0.010  
& \textbf{4.2e-7 $\!\pm\!$ 1.4e-7} \\

Orbit Dist. 
& 0.63 $\!\pm\!$ 0.023 
& 0.61 $\!\pm\!$ 0.009  
& \textbf{2.7e-3 $\!\pm\!$ 7.7e-5}  

& 0.20 $\!\pm\!$ 0.006 
& 0.70 $\!\pm\!$ 0.007  
& \textbf{0.02 $\!\pm\!$ 2.5e-4} \\

Type Degree 
& 0.30 $\!\pm\!$ 0.006 
& 0.24 $\!\pm\!$ 4.6e-4  
& \textbf{0.07 $\!\pm\!$ 4.9e-4}  

& 0.19 $\!\pm\!$ 0.002 
& 0.27 $\!\pm\!$ 0.001  
& \textbf{0.13 $\!\pm\!$ 0.001} \\

Feature EMD 
& 0.97 $\!\pm\!$ 0.001 
& \textbf{0.95 $\!\pm\!$ 1.5e-4}  
& 0.98 $\!\pm\!$ 2.6e-4  

& 0.99 $\!\pm\!$ 1.9e-4 
& \textbf{0.94 $\!\pm\!$ 2.8e-4}  
& 0.98 $\!\pm\!$ 2.7e-4\\

\bottomrule
\toprule
\end{tabular}
\begin{tabular}{l c c c c c c}
 & \multicolumn{3}{c}{Country, Language and Movie} & \multicolumn{3}{c}{Country, Language, Movie and Year} \\
 \cmidrule(lr){2-4}
\cmidrule(lr){5-7}
Metrics & VGAE & NetGAN & \modelname & VGAE & NetGAN & \modelname\\
\midrule
Degree Dist. 
& 0.30 $\!\pm\!$ 0.002 
& \textbf{0.22 $\!\pm\!$ 3.6e-4}  
& 0.29 $\!\pm\!$ 0.004  

& \textbf{0.25 $\!\pm\!$ 0.002} 
& 0.60 $\!\pm\!$ 0.000  
& 0.45 $\!\pm\!$ 0.004\\

Clust. Coeff. 
& --- $\!\pm\!$ --- 
& 0.01 $\!\pm\!$ 2.2e-4  
& \textbf{5.5e-5 $\!\pm\!$ 1.5e-5}  

& 0.71 $\!\pm\!$ 0.006 
& 2.00 $\!\pm\!$ 0.000  
& \textbf{6.9e-6 $\!\pm\!$ 2.5e-6} \\

Spect. Dens. 
& 0.49 $\!\pm\!$ 0.260 
& 0.44 $\!\pm\!$ 0.068  
& \textbf{0.31 $\!\pm\!$ 0.110}  

& 0.46 $\!\pm\!$ 0.164 
& 0.75 $\!\pm\!$ 0.000  
& \textbf{0.43 $\!\pm\!$ 0.076} \\

Motif. Dist. 
& --- $\!\pm\!$ --- 
& \textbf{3.0e-5 $\!\pm\!$ 1.0e-6}  
& 3.8e-6 $\!\pm\!$ 1.1e-6  

& 0.15 $\!\pm\!$ 0.002 
& 2.00 $\!\pm\!$ 0.000  
& \textbf{0.06 $\!\pm\!$ 2.7e-5} \\

Orbit Dist. 
& --- $\!\pm\!$ --- 
& 0.06 $\!\pm\!$ 0.001  
& \textbf{0.01 $\!\pm\!$ 3.6e-4}  

& 0.12 $\!\pm\!$ 0.003 
& 2.00 $\!\pm\!$ 0.000  
& \textbf{0.02 $\!\pm\!$ 3.1e-4} \\

Type Degree 
& 0.19 $\!\pm\!$ 0.002 
& 0.22 $\!\pm\!$ 7.7e-5  
& \textbf{0.10 $\!\pm\!$ 0.002}  

& 0.17 $\!\pm\!$ 0.001 
& 0.32 $\!\pm\!$ 3.8e-4  
& \textbf{0.16 $\!\pm\!$ 0.001} \\

Feature EMD 
& 0.99 $\!\pm\!$ 0.001 
& \textbf{0.93 $\!\pm\!$ 5.6e-5}  
& 0.98 $\!\pm\!$ 1.7e-4  

& 0.99 $\!\pm\!$ 1.7e-4 
& \textbf{0.94 $\!\pm\!$ 4.2e-4}  
& 0.98 $\!\pm\!$ 4.7e-4\\
\bottomrule
\end{tabular}
\end{table}

\begin{table}[t]
\centering
\setlength{\tabcolsep}{3pt}
\caption{Single Graphs : DBLP results for two splits}
\label{S_dblp}
\begin{tabular}{l c c c c c c}
\toprule
 & \multicolumn{3}{c}{Author and Conference} & \multicolumn{3}{c}{Author, Conference, and Pub. Type} \\
 \cmidrule(lr){2-4}
\cmidrule(lr){5-7}
Metrics & VGAE & NetGAN & \modelname & VGAE & NetGAN & \modelname\\
\midrule 
Degree Dist. 
& 0.41 $\!\pm\!$ 0.003 
& 0.73 $\!\pm\!$ 0.000  
& \textbf{0.17 $\!\pm\!$ 0.001}  

& 0.41 $\!\pm\!$ 0.002 
& 0.69 $\!\pm\!$ 0.000  
& \textbf{0.30 $\!\pm\!$ 0.004} \\

Clust. Coeff. 
& --- $\!\pm\!$ --- 
& 2.00 $\!\pm\!$ 0.000  
& \textbf{0.01 $\!\pm\!$ 3.3e-4}  

& --- $\!\pm\!$ --- 
& 2.00 $\!\pm\!$ 0.000   
& \textbf{2.0e-3 $\!\pm\!$ 1.8e-4} \\

Spect. Dens. 
& 0.68 $\!\pm\!$ 0.080 
& 0.75 $\!\pm\!$ 0.000  
& \textbf{0.29 $\!\pm\!$ 0.205}  

& \textbf{0.48 $\!\pm\!$ 0.023} 
& 0.76 $\!\pm\!$ 0.000  
& 0.62 $\!\pm\!$ 0.105 \\

Motif. Dist. 
& --- $\!\pm\!$ --- 
& 2.00 $\!\pm\!$ 0.000  
& \textbf{0.33 $\!\pm\!$ 0.010}  

& --- $\!\pm\!$ --- 
& 2.00 $\!\pm\!$ 0.000  
& \textbf{0.07 $\!\pm\!$ 1.7e-4} \\

Orbit Dist. 
& --- $\!\pm\!$ --- 
& 2.00 $\!\pm\!$ 0.000  
& \textbf{0.52 $\!\pm\!$ 0.009}  

& --- $\!\pm\!$ --- 
& 2.00 $\!\pm\!$ 0.000  
& \textbf{0.41 $\!\pm\!$ 0.003} \\

Type Degree 
& 0.30 $\!\pm\!$ 0.004 
& 0.36 $\!\pm\!$ 4.1e-4  
& \textbf{0.040 $\!\pm\!$ 3.1e-4} 

& 0.30 $\!\pm\!$ 0.004 
& 0.33 $\!\pm\!$ 3.1e-4  
& \textbf{0.14 $\!\pm\!$ 0.001} \\

Feature EMD 
& \textbf{3.30 $\!\pm\!$ 0.010} 
& 3.32 $\!\pm\!$ 0.002
& 3.51 $\!\pm\!$ 0.004  

& 3.47 $\!\pm\!$ 0.007 
& \textbf{3.32 $\!\pm\!$ 0.002}  
& 3.63 $\!\pm\!$ 0.012\\
\bottomrule
\end{tabular}
\end{table}


\section{Generated Graph Samples}
We show samples of the graphs generated from our models in Figure \ref{fig:gen_graphs}. Some likeness can be seen between the example graphs from the dataset in Figure \ref{fig:DBLP_author_conference_type_viz} and \ref{fig:IMDB_country_language_movie_year_viz} and the generated graphs. For instance, red nodes (which are movies in IMDB) are centric in IMDB graphs in both the figures and in a similar vein, in DBLP graphs, blue nodes (which are papers in DBLP) are centric in both the figures.

\begin{figure}
    \centering
    \begin{subfigure}[t]{0.27\textwidth}
        \fbox{\includegraphics[height=0.75\textwidth]{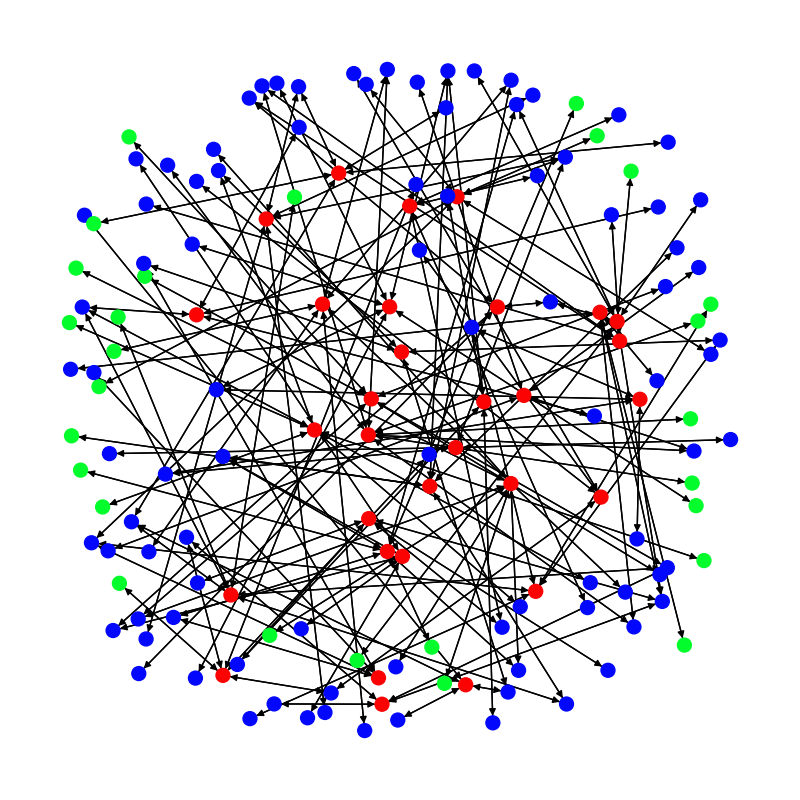}}
        \caption{IMDB from split criteria - (movie, year, language and country)}
        
    \end{subfigure}\hspace{0.05\textwidth} 
    \begin{subfigure}[t]{0.27\textwidth}
        \fbox{\includegraphics[height=0.75\textwidth]{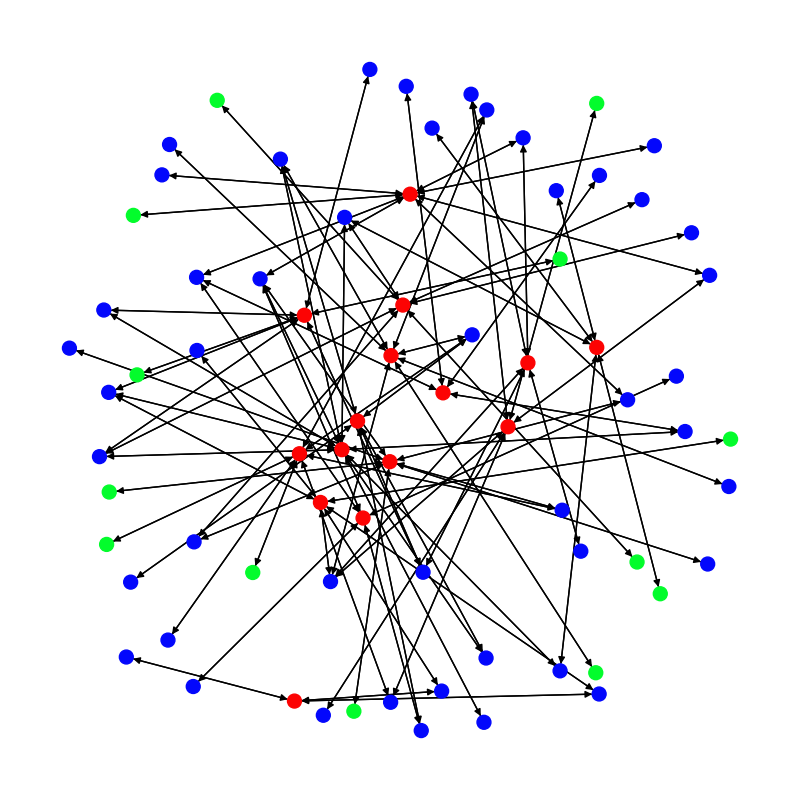}}
        \caption{IMDB from split criteria - (movie, language and country)}
        
    \end{subfigure}\hspace{0.05\textwidth} 
    \begin{subfigure}[t]{0.27\textwidth}
        \fbox{\includegraphics[height=0.75\textwidth]{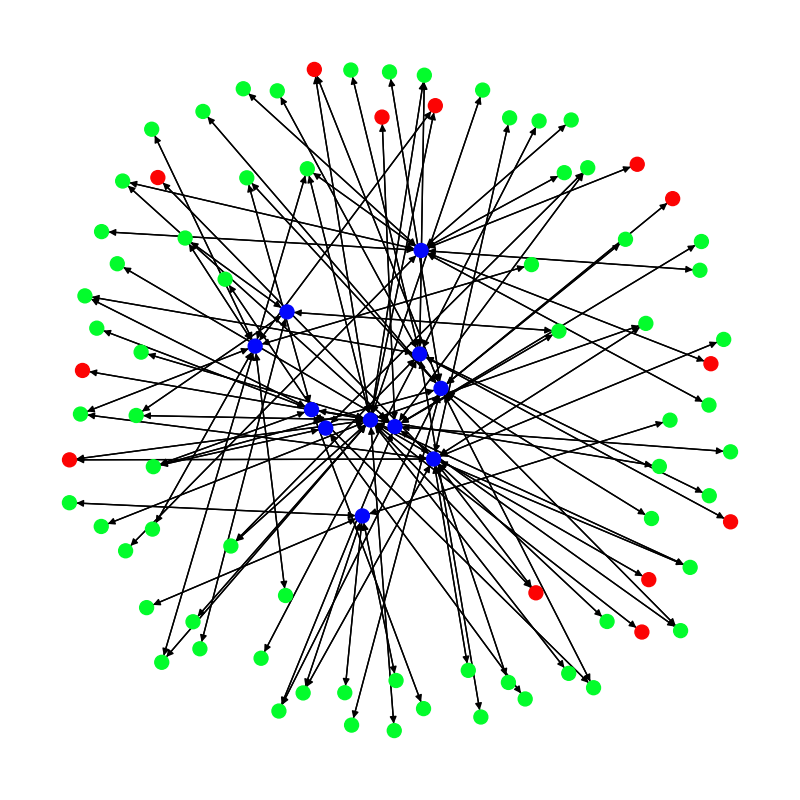}}
        \caption{DBLP from split criteria - (author, conference and type)}
        
    \end{subfigure}

    \caption{Generated Graph Samples}
    \label{fig:gen_graphs}
\end{figure}




\end{document}